\documentclass[11pt,letterpaper,onecolumn]{article}

\pdfoutput=1

\usepackage[letterpaper, lmargin=1.25in, rmargin=1.25in, tmargin=1in, bmargin=1in]{geometry}
\usepackage[utf8]{inputenc}
\usepackage[english]{babel}

\usepackage{amsmath,amssymb,amsthm,amsfonts,latexsym,bbm,xspace,thm-restate}
\usepackage{graphicx,float,mathtools,braket,mathdots}
\usepackage{enumitem,booktabs,forest,soul,comment}
\usepackage[useregional]{datetime2}
\DTMusemodule{english}{en-US}
\usepackage[colorlinks,citecolor=blue,,linkcolor=magenta,bookmarks=true]{hyperref}
\usepackage[nameinlink,capitalize]{cleveref}

\newtheorem{theorem}{Theorem}[section]
\newtheorem{lemma}[theorem]{Lemma}

\newtheorem{corollary}[theorem]{Corollary}

\crefname{fact}{Fact}{Facts}

\theoremstyle{definition}
\newtheorem{definition}[theorem]{Definition}

\let\Pr\relax
\DeclareMathOperator{\Pr}{Pr}
\DeclareMathOperator{\E}{E}

\DeclareMathOperator{\poly}{poly}

\newcommand{\nats}{\mathbb N}

\newcommand{\class}[1]{\ensuremath{\mathsf{#1}}\xspace}
\mathchardef\mhyphen="2D %

\newcommand{\PTIME}{\class{P}}
\newcommand{\BPP}{\class{BPP}}
\newcommand{\NP}{\class{NP}}

\newcommand{\AC}{\class{AC}}

\newcommand{\NEXP}{\class{NEXP}}

\newcommand{\PSPACE}{\class{PSPACE}}

\newcommand{\BQP}{\class{BQP}}
\newcommand{\QCMA}{\class{QCMA}}
\newcommand{\QMA}{\class{QMA}}

\newcommand{\Pii}[1][i]{\class{\Pi^{p}_{#1}}}

\newcommand{\GapP}{\class{GapP}}
\newcommand{\sharpP}{\class{\#P}}
\newcommand{\PP}{\class{PP}}

\newcommand{\CH}{\class{CH}}
\newcommand{\CHi}[1][i]{\class{C_{#1}P}}

\newcommand{\Ppoly}{\class{P\!/\!poly}}
\newcommand{\BQPpoly}{\class{BQP\!/\!poly}}
\newcommand{\BQPqpoly}{\class{BQP\!/\!qpoly}}
\newcommand{\YQP}{\class{YQP}}
\newcommand{\YQPpoly}{\class{\YQP\!/\!poly}}
\newcommand{\YQPstar}{\class{YQP^*}}
\newcommand{\YQPstarpoly}{\class{\YQP^*\!/\!poly}}
\newcommand{\AWPP}{\class{AWPP}}
\newcommand{\APP}{\class{APP}}

\newcommand{\postBQP}{\class{postBQP}}

\newcommand{\prob}[1]{\textsc{#1}}

\newcommand{\calB}{\mathcal{B}}

\newcommand{\bbI}{\mathbb{I}}

\newcommand{\abs}[1]{\left\lvert #1 \right\rvert}

\newcommand{\ie}{i.e.\xspace}

\newcommand{\eg}{e.g.\xspace}

\newcommand{\badv}{b_{\text{adv}}}
\newcommand{\bout}{b_{\text{out}}}

\title{Even quantum advice is unlikely to solve \texorpdfstring{$\PP$}{PP}}

\author{Justin Yirka\thanks{Supported via Scott Aaronson by a Simons Investigator Award.}
	\\\small{\textsl{The University of Texas at Austin}}\\\small{\texttt{\href{mailto:yirka@utexas.edu}{yirka@utexas.edu}}}
}

\date{May 2024}

\begin{document}

\maketitle

\begin{abstract}
	We give a corrected proof that
	if $\PP\subseteq \BQPqpoly$, then the Counting Hierarchy collapses,
	as originally claimed by [Aaronson 2006].
	This recovers the related unconditional claim that $\PP$ does not have circuits
	of any fixed size $n^k$ even with quantum advice.
	We do so by proving that $\YQPstar$, an oblivious version of $\QMA\cap\class{coQMA}$,
	is contained in $\APP$, and so is $\PP$-low.
\end{abstract}

\section{Introduction}

Do reasonably-sized circuits solve hard problems, given that we allow the computation
to vary (non-uniformly) as the problem size grows?
While directly answering this question for classes such as $\NP$ has proven difficult,
progress has been made showing conditional results,
such as the Karp-Lipton theorem that if $\NP\subseteq \Ppoly$,
then the polynomial hierarchy collapses \cite{karp1980some},
or showing upper bounds against larger classes,
such as $\PP$ or $\NEXP$ \cite{vinodchandran2005note,williams2014nonuniform}.
Exploring further, we can consider quantum computation.
In this model, circuits are typically uniformly generated
but might accept non-uniform \emph{advice} strings as part of their input.
Moreover, quantum circuits can not only receive classical
advice strings ($\BQPpoly$), but
quantum \emph{advice states} ($\BQPqpoly$).

In \cite{aaronson_subtle}, Aaronson proved new quantum circuit lower bounds,
among other results.
In particular, he gave several results characterizing quantum circuits' ability
to solve problems in the class $\PP$,
the class of problems decidable by probabilistic algorithms with no promise gap,
\ie{} which accept with probability at least $1/2$ or strictly less than $1/2$.
$\PP$ contains both $\BPP$ and $\NP$ and is contained in $\PSPACE$.
Aaronson proved that $\PTIME^\PP$ does not have circuits of size $n^k$ for any
fixed constant $k$ even if the circuits use quantum advice states.
Second, he claimed a quantum analogue of the Karp-Lipton theorem, showing that if
$\PP\subseteq \BQPqpoly$, then the Counting Hierarchy ($\CH$) collapses to $\QMA$,
where the Counting Hierarchy is the infinite sequence of classes $\CHi[1]=\PP$ and
$\CHi[i]=\left(\CHi[i-1]\right)^\PP$,
and where $\QMA$ is a quantum analogue of $\NP$.
Similarly, he showed that under the stronger assumption $\PP \subseteq \BQPpoly$,
using classical advice instead of quantum advice,
then $\CH = \QCMA$.
Third, Aaronson combined these results to give the unconditional bound that $\PP$
does not have classical or quantum circuits of size $n^k$ for
any fixed constant $k$ even with quantum advice.\footnote{Slightly earlier,
Vinochandran \cite{vinodchandran2005note} gave a proof that $\PP$ does not
have \emph{classical} circuits of fixed polynomial size.}

However, Aaronson later noted there was an error in one of the proofs \cite{aaronsonBlogErrors2017}.
The first of the above results was unaffected,
but the proof of the second result only held under the stronger assumption that
$\PP\subseteq \BQPpoly$.
This also meant the third result only held for quantum circuits with classical,
not quantum, advice.
Fortunately, no other results in \cite{aaronson_subtle} were affected,
but no fix for this bug
was forthcoming.

Very briefly, the error was a claim that for oracle classes of the form
$\class{C}^{\BQPqpoly}$, if a machine for the base class $\class{C}$
is able to find the quantum advice state that will be used by the oracle machine,
then the base machine can ``hard-code'' the advice state into its oracle queries so that
the oracle no longer needs the power to find its own advice,
thus reducing $\class{C}^{\BQPqpoly}$ to $\class{C}^{\BQP}$.
This approach works for classes with classical advice, like $\class{C}^{\BQPpoly}$.
But, because complexity classes and their associated oracles
are defined in terms of (classical) strings as input,
there is no way to hard-code a general quantum advice state into a query.

In this note, we give a corrected proof of Aaronson's full claims.
We show that if $\PP\subseteq \BQPqpoly$,
then the Counting Hierarchy collapses to $\QMA$ and in fact to $\YQPstar$.
Given this correction, Aaronson's proof for the third claim,
that $\PP$ does not have circuits of size $n^k$
for any fixed constant $k$ even with quantum advice, now goes through.

Our primary technical contribution is to show $\YQPstar \subseteq \APP$.
Here, $\YQPstar$ is an oblivious version of $\QMA\cap\class{coQMA}$, meaning
there exists a useful proof state which depends only on the size of the input.
Crucially, a $\YQPstar$ protocol includes a proof-verification circuit that tests
if the given quantum state is a ``good'' proof, before the proof is used
to determine whether a particular input is a YES or NO instance.
We apply the in-place error reduction technique of
Marriott and Watrous \cite{marriott2005quantum}
to this proof-verification circuit.
The class $\APP$ is a subclass of $\PP$ with an arbitrarily small but nonzero promise gap.
It is known to have the nice property that $\PP^\APP = \PP$ \cite{li1993counting}.
Thus, $\YQPstar$ is also $\PP$-low.
Our corrected proof combines this result with
the known equality $\BQPqpoly = \YQPstarpoly$,
serendipitously proven by Aaronson with Drucker \cite{aaronson2014full}.
Now, instead of following Aaronson's original attempt to collapse $\PP^\PP$ to $\PP^\BQPqpoly$ to $\PP^\BQP$ to $\PP$,
we can collapse $\PP^\PP$ to $\PP^\YQPstarpoly$ to $\PP^\YQPstar$ to $\PP$.

\sloppy %
Our results provide stronger implications and improved bounds for quantum circuits
with quantum advice
and establish new insights into
$\PP$-lowness and classes within $\APP$.
Compared to other quantum Karp-Lipton style bounds, including that if
$\QCMA\subseteq \BQPpoly$, then $\class{QCPH}$ collapses \cite{agarwal2024quantum}
and that if
$\NP\subseteq \BQPqpoly$, then $\Pii[2] \subseteq \QMA^{\class{PromiseQMA}}$ \cite{aaronson2014full},
the supposition $\PP\subseteq \BQPqpoly$
and the implied collapse of $\CH$ are both formally stronger.
As for unconditional bounds, following Aaronson's unaffected result that
$\PTIME^\PP$ does not have quantum circuits with quantum advice of any fixed
polynomial size, our corrected result bound against $\PP$ is the first improved bound
of fixed-size circuits with quantum advice.
Regarding $\PP$-lowness, our primary lemma establishes $\YQPstar$ as the largest
natural quantum complexity class known to be $\PP$-low,
improving on the fact that $\BQP$ is $\PP$-low \cite{fortnow1999complexity}.\footnote{Morimae and Nishimura \cite{morimae16}
	gave definitions involving quantum postselection constructed to equal $\AWPP$
	and $\APP$, which are $\PP$-low.}
Finally, for $\APP$, while the largest witness-based class previously known
to be contained in $\APP$ was $\class{FewP}$ \cite{li1993counting},
our result shows that $\APP$ in fact contains oblivious-witness classes
including $\YQPstar\supseteq\class{YMA^*}\supseteq\class{YP^*}\supseteq\class{FewP}$.

\section{Preliminaries}

In this section, we discuss non-uniform circuits,
give definitions for $\YQP$ and $\APP$,
discuss $\PP$ and $\GapP$,
and finally state a fact relating quantum circuits to $\GapP$.
For a deeper introduction to these classes and other concepts,
see \cite{aaronson_subtle,aaronson2014full} and \eg{} \cite{arora2009computational}.

\paragraph{\texorpdfstring{~}{Non-uniform circuits}}
The classes of non-uniform circuits we consider, including $\Ppoly, \BQPpoly, \BQPqpoly$,
share the following key characteristics.
First, they are defined in terms of circuits or advice that may depend
on the size of the problem input (but not on the input itself),
with no requirement that the circuit or advice is generated by a uniform algorithm.
Second, the circuits are defined with bounded fan-in and fan-out, in contrast to
classes such as $\AC_0$ or $\class{QAC}_0$.
Third, the classes consider circuits of polynomial-size,
where the size is the number of gates in a circuit.

Advice is typically considered ``trusted'' in that there is no promised behavior
when given the ``wrong'' advice,
so an analysis of correctness can usually assume the ``right'' advice is provided.
In line with this convention,
we assume the standard definitions of $\BQPpoly$ and $\BQPqpoly$
in which a circuit is only required to accept with high or low probability
(outside of the ``promise gap'') when the correct advice is provided,
although the same notation has sometimes been used to refer to other definitions,
see \eg{} \cite{ZooBQPmpoly}.

\paragraph{\texorpdfstring{~}{YQP}}
The class $\YQP$ was first described in \cite{aaronson2007learnability},
but the definition was later corrected by Aaronson and Drucker \cite{aaronson2014full}.
Informally, it is the oblivious version of $\QMA{}\cap\class{coQMA}$,
so that the witness sent by Merlin depends only on the length of the input.
In contrast to the advice of $\Ppoly$, this has been described as
``untrusted advice'' \cite{aaronson2007learnability}.
Oblivious proofs can also be thought of as restricting non-uniform classes,
like $\Ppoly$ or $\BQPqpoly$,
to advice which is verifiable~\cite{goldreich2015input}.

\begin{definition}\label{def:YQP}
	A language $L$ is in $\YQP$ if there exists a polynomial-time uniform family of
	quantum circuits $\{Y_n\}_{n\in \nats}$ that satisfy the following.
	Circuit $Y_n$ is of size $\poly(n)$ and takes as input
	$x\in \{0,1\}^n$, a $p(n)$-qubit state $\rho$ for some $p(n)\leq \poly(n)$,
	and an ancilla register initialized to the all-zero state,
	and has two designated 1-qubit ``advice-testing'' and ``output'' qubits.
	$Y_n(x,\rho)$ acts as follows:
	\begin{enumerate}
		\item First, $Y_n$ applies a subcircuit $A_n$ to all registers, after which
		the advice-testing qubit is measured, producing a value $\badv\in\{0,1\}$.
		\item Next, $Y_n$ applies a second subcircuit $B_n$ to all registers,
		then measures the output qubit, producing a value $\bout\in\{0,1\}$.
	\end{enumerate}
	These output bits satisfy the following:
	\begin{itemize}
		\item For all $n$, there exists a $\rho_n$ such that for all $x$,
		the advice bit satisfies $\E[\badv] \geq 9/10$.
		\item For any $x,\rho$ such that $\E[\badv]\geq 1/10$,
		on input $x,\rho$ we have
		\[
			\Pr\left[\bout = L(x) \mid \badv = 1\right] \geq 9/10 .
		\]
	\end{itemize}
	$L$ is in the subclass $\YQPstar$ if the family can be chosen such that $\badv$ is
	independent of $x$.
\end{definition}

Just as $\class{Oblivious\mhyphen{}NP}$ is unlikely to contain $\NP$ \cite{fortnow2009fixed},
it also seems unlikely that $\QMA$ is contained in $\YQP$.
On the other hand, it is straightforward to show that any sparse language can be
verified obliviously,
so $\class{FewP}\subseteq \class{YP^*}$ and $\class{FewQMA}\subseteq \YQPstar$.
We also have the trivial bounds $\BQP\subseteq \YQPstar \subseteq \YQP\subseteq \QMA$
and $\YQP\subseteq \BQPqpoly$.
Studying $\YQP$ may be motivated by the use of oblivious complexity classes
in constructing circuit lower bounds \cite{fortnow2009fixed,GLV24oblivious},
by the fact that $\BQPqpoly = \YQPstarpoly = \YQPpoly$ shown by \cite{aaronson2014full},
or by the results shown in this work.

\paragraph{\texorpdfstring{~}{APP}}
The class $\APP$ was introduced by Li \cite{li1993counting} in pursuit of
a large class of $\PP$-low languages.
We use the equivalent definition given by Fenner \cite[Corollary 3.7]{fenner2003pp}.

\begin{definition}\label{def:APP}
	$L\in \APP$ if and only if there exist
	functions $f,g\in \GapP$ and constants $0\leq \lambda < \upsilon \leq 1$ such that
	for all $n$ and $x\in\{0,1\}^n$,
	we have $g(1^n)>0$ and
	\begin{itemize}
		\item If $x\in L$ then
		$\upsilon g(1^n) \leq f(x) \leq g(1^n)$;
		\item If $x\notin L$ then $0\leq f(x) \leq \lambda g(1^n)$.
	\end{itemize}
\end{definition}

In the above definition, recall that $\GapP$ is the closure of $\sharpP$ under
subtraction.
In other words, while every function $f\in \sharpP$ corresponds to a nondeterministic
polynomial-time Turing Machine $N$ such that $f(x)$ equals the number of accepting paths
of $N(x)$,
a $\GapP$ function equals the number of accepting paths minus the number
of rejecting paths.

$\APP$ is a subclass of $\PP$ and is $\PP$-low, meaning $\PP^\APP = \PP$.
Recall that $\PP$ can be thought of as comparing a $\sharpP$ function
to a threshold exactly, with no promise gap.
The class in fact remains unchanged if it is defined as comparing a $\GapP$ function
to a threshold, and the threshold may be as simple as one-half of the possible paths or as complex as a
$\GapP$ function.
In these terms,
$\APP$ can be thought of as comparing a $\GapP$ function (here $f(x)$) to some threshold
(here $g(1^n)$),
where the complexity of the threshold is limited to a $\GapP$ function which may
depend on the input size but not the input,
and where there is some arbitrarily small but nonzero promise gap
(from $\lambda g(1^n)$ to $\upsilon g(1^n)$).

The best known upper bound on $\APP$ is $\PP$.
Compared with the class $\class{A_0PP}=\class{SBQP} \subseteq \PP$ \cite{kuperberg2015hard},
$\class{A_0PP}$ contains $\QMA$ and is \emph{not} known to be $\PP$-low,
while $\APP$ is not known to contain even $\NP$ but is $\PP$-low.

\paragraph{\texorpdfstring{~}{Useful fact}}
We will use the following fact shown for uniform circuit families by Watrous
\cite[Section~IV.5]{watrous2008quantum}, and shown earlier for QTMs by Fortnow and
Rogers \cite{fortnow1999complexity}.

\begin{lemma}\label{lem:quantumGapP}
	For any polynomial-time uniformly generated family of quantum circuits
	$\{Q_n\}_{n\in \nats}$ each of
	size bounded by a polynomial $t(n)$,
	there is a $\GapP$ function $f$ such that for all $n$-bit $x$,
	\[
		\Pr\left[Q_n(x)\text{ accepts}\right] = \frac{f(x)}{5^{t(n)}} .
	\]
\end{lemma}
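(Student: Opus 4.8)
The plan is to track amplitudes through the circuit $Q_n$ and show that, after clearing denominators, every reachable amplitude is an integer divided by a fixed power of $5$, so that the acceptance probability—being a sum of squared magnitudes of such amplitudes—has the claimed form, and moreover the numerator is computable in $\GapP$. First I would fix a universal gate set for which every gate matrix has entries in $\frac{1}{\sqrt5}\integers[i]$, or more simply in $\frac{1}{\sqrt5}\integers$ after doubling the dimension to separate real and imaginary parts; the Hadamard-plus-Toffoli-type sets used by Watrous have this property, with the key point being that a single normalization constant $\sqrt5$ suffices for all gates (using that $5 = 1^2 + 2^2$ lets one encode the $\frac{1}{\sqrt2}$ of Hadamard and still leave room for the integer combinatorics). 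Composing $t(n)$ such gates, each acting on $O(1)$ qubits while the identity acts elsewhere, multiplies denominators to at most $5^{t(n)}$ (padding with factors of $5$ where a gate is "free" so the exponent is exactly $t(n)$), so every amplitude of the final state on a fixed computational-basis string $y$ equals $a_y / \sqrt{5^{t(n)}}$ for some integer $a_y = a_y(x)$.

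Next I would express $a_y(x)$ as a signed sum over computational paths: $a_y(x) = \sum_{\text{paths } \gamma \text{ from } 0\cdots0 \text{ to } y} \operatorname{wt}(\gamma)$, where $\operatorname{wt}(\gamma)$ is the product of the (rescaled integer) gate entries encountered along $\gamma$, and where the number of intermediate basis states is at most $2^{\poly(n)}$ so a path is describable by a $\poly(n)$-bit string. Since each gate entry is an integer of absolute value bounded by a constant, $\operatorname{wt}(\gamma)$ is an integer of absolute value at most $c^{t(n)} \leq 2^{\poly(n)}$, hence expressible as a difference of two $\poly(n)$-bit nonnegative integers, i.e. as a difference of counts of accepting paths of an auxiliary nondeterministic machine that first guesses $\gamma$, verifies it is a legal transition sequence ending in an accepting string $y$, then branches to "count" $|\operatorname{wt}(\gamma)|$ with the sign recorded; this is a standard $\GapP$ closure argument (closure under exponential-length products and sums with polynomially-bounded bit-length, cf.\ Fenner–Fortnow–Kurtz). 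Summing $a_y(x)^2$ over accepting $y$—using that $\GapP$ is closed under multiplication and polynomial summation—produces a single $\GapP$ function $f(x) = \sum_{y \text{ accepting}} a_y(x)^2$, and by construction $\Pr[Q_n(x)\text{ accepts}] = \sum_y |a_y(x)/\sqrt{5^{t(n)}}|^2 = f(x)/5^{t(n)}$.

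The main obstacle I anticipate is the bookkeeping to make the denominator exactly $5^{t(n)}$ uniformly in $n$ rather than some messier gate-dependent product: one must choose the gate set and the encoding so that every gate, including the identity-padding on untouched qubits, contributes precisely one factor of $5$, which forces a slightly artificial normalization (e.g.\ replacing a gate $U$ by $\sqrt5 \cdot U$ as an integer matrix and absorbing the single global $5^{-t(n)}$ at the end). A secondary but routine point is verifying that the path-counting machine runs in polynomial time: this needs the circuit family to be polynomial-time uniform so the sequence of gates and their matrix entries can be generated on the fly, which is exactly the hypothesis. Everything else is an application of the well-known fact that $\GapP$ is closed under the relevant operations, so once the amplitude-as-signed-path-sum representation is set up with the right denominator, the lemma follows; I would cite Fortnow–Rogers and Watrous for the original arguments and keep our write-up to the normalization and $\GapP$-closure details.
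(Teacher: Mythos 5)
The paper does not actually prove \cref{lem:quantumGapP}; it imports it as a known fact from Watrous \cite{watrous2008quantum} and Fortnow--Rogers \cite{fortnow1999complexity}, so there is no in-paper argument to compare against. Your reconstruction is the standard path-sum proof from those sources and is essentially correct: write each amplitude as a signed sum over computational paths of integer weights computable in polynomial time, observe that such a sum is a $\GapP$ function of $(x,y)$, and use closure of $\GapP$ under squaring and exponential summation to get the acceptance probability's numerator.

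Two caveats on the normalization, which you correctly flag as the delicate point but resolve slightly incorrectly. First, the lemma is an \emph{exact} equality, so it must be read as a statement about circuits over a fixed gate set with the right algebraic structure (the convention in the cited sources); for an arbitrary gate set you would have to compile approximately and the equality would fail. Second, your proposed fix of ``replacing a gate $U$ by $\sqrt5\cdot U$ as an integer matrix'' does not work for the classical gates in the set: $\sqrt5$ times a Toffoli or identity matrix is not an integer matrix, so path weights would pick up stray odd powers of $\sqrt5$ and the per-path weight would not be an integer. The clean standard choice (Adleman--DeMarrais--Huang, Fortnow--Rogers) takes all gate entries in $\tfrac15\{0,\pm3,\pm4,\pm5\}$, so that $5U$ is integral for \emph{every} gate; amplitudes then have denominator $5^{t(n)}$ and probabilities denominator $5^{2t(n)}$. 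Alternatively one can track the exact number $j\leq t(n)$ of irrational gates and multiply through by $5^{t(n)-j}$ at the end, using that only even powers of $\sqrt5$ survive in $\lvert\cdot\rvert^2$. Either way the exponent may come out as $2t(n)$ or some other fixed polynomial rather than literally $t(n)$; this is harmless, since the paper's use of the lemma only needs \emph{some} polynomial $r$ with denominator $5^{r(n)}$ and a positive integer numerator in $\GapP$.
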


\section{Results}

We first prove our main technical result, that $\YQPstar\subseteq \APP$,
improving on the largest proof-based complexity class known to be contained
in $\APP$.

Our approach is as follows.
$\APP$ evaluates the ratio of two $\GapP$ functions, where one of the functions
is only allowed to depend on the input length. By \cref{lem:quantumGapP}, functions
in $\GapP$ can encode the output probabilities of quantum circuits.
So, for a $\YQPstar$ computation with circuit $Y$ and subcircuit $A$,
we run them both on the maximally-mixed state
and ask $\APP$ to determine the ratio of their acceptance probabilities.
In other words, when $A$ accepts, however large or small that probability is,
does $Y$ usually accept or usually reject?
We require the error reduction technique of
Marriott and Watrous \cite{marriott2005quantum} to make the error in $A$ negligible.

\begin{lemma}\label{lem:YQPinAPP}
	$\YQPstar\subseteq \APP$.
\end{lemma}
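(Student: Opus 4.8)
The plan is to realize the $\APP$ definition directly: take the $\GapP$ function $f$ to encode the acceptance probability of $Y_n$ run on the maximally mixed state in place of the proof register, and take $g$ to encode the acceptance probability of the advice-testing subcircuit $A_n$ run on the same mixed state, scaled so that both share a common denominator. Concretely, by \cref{lem:quantumGapP}, for a $\YQPstar$ verifier with subcircuits $A_n, B_n$ of total size $t(n)$, there are $\GapP$ functions realizing $\Pr[\text{$A_n$ accepts on }(x,\text{mixed})]$ and $\Pr[\text{$B_n$ accepts} \mid \text{$A_n$ accepted, on }(x,\text{mixed})]$ as ratios over $5^{t(n)}$; multiplying through, $f(x) := 5^{t(n)}\cdot \Pr[\badv=1 \wedge \bout=1]$ and $g(1^n) := (\text{something proportional to}) \Pr[\badv=1]$ summed over all $x$ of length $n$, or better, $g$ chosen so that $g(1^n)$ lower-bounds $f(x)$ when $x\in L$ and upper-bounds it when $x\notin L$, matches the two-sided inequality of \cref{def:APP}. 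The key point that makes $g$ legal is that in $\YQPstar$ the quantity $\E[\badv]$ is \emph{independent of $x$}, so $\Pr[\badv=1]$ on the mixed state depends only on $n$ — this is exactly why we need the starred version of the class and not plain $\YQP$.

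The main obstacle is the error. The $\YQP$ guarantee only says $\Pr[\bout=L(x)\mid\badv=1]\ge 9/10$ when $\E[\badv]\ge 1/10$, but when we feed the maximally mixed state the advice-acceptance probability could be tiny and, crucially, the "good proof" component of the mixed state is weighted by only $2^{-p(n)}$. Conditioning on $\badv=1$ then mixes the good-proof branch with junk branches where no guarantee holds at all, so the conditional probability $\Pr[\bout=L(x)\mid\badv=1]$ can be close to $1/2$ and the promise gap collapses. To fix this, before encoding into $\GapP$ I would first amplify the advice-testing circuit $A_n$ using the in-place Marriott–Watrous error reduction \cite{marriott2005quantum}: replace $A_n$ by a circuit that estimates $\E[\badv]$ by repeated in-place measurement and rejects unless the empirical value is high, driving the "accept" probability on any state with $\E[\badv]\le 1/10$ down to $\le 2^{-q(n)}$ while keeping it $\ge 1-2^{-q(n)}$ on any state with $\E[\badv]\ge 9/10$, for a polynomial $q$ we choose. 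In-place error reduction is essential here precisely because the state is not re-preparable — we only get one copy of $\rho$ (or here, one copy of the mixed state) — so we cannot simply run $A_n$ many times on fresh copies.

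After amplification, the analysis goes through: run the amplified verifier on $(x, \text{maximally mixed proof})$. The maximally mixed state is $2^{-p(n)}$ times the identity, i.e. the uniform mixture over an orthonormal basis containing (a state close to) the good proof $\rho_n$. In the branch starting from the good-proof component, $\badv=1$ with probability $\ge 1-2^{-q}$ and then $\bout=L(x)$ with probability $\ge 9/10$; in every other branch $\badv=1$ with probability $\le 2^{-q}$, contributing at most $2^{-p-q}$ in total to $\Pr[\badv=1\wedge\bout\ne L(x)]$. Taking $q$ large enough that $2^{-q}$ dominates all error terms, we get, writing $N := 2^{p(n)}5^{t'(n)}$ for the appropriate common scale with $t'$ the amplified size: if $x\in L$ then $f(x) \ge (9/10 - 2^{-q})\cdot 5^{t'}$ while the total advice-acceptance $g(1^n)$ is between $(1-2^{-q})5^{t'}$ and $(1+2^{-q})5^{t'}$ times the count of basis states that look like good proofs — after a final rescaling $g(1^n) := \lceil \text{(amplified }A\text{-accept count on mixed state)}\rceil$ one checks $\upsilon g(1^n)\le f(x)\le g(1^n)$ with, say, $\upsilon = 4/5$, and if $x\notin L$ then $f(x)\le \lambda g(1^n)$ with $\lambda = 1/5$; one verifies $g(1^n)>0$ since the amplified $A_n$ accepts $\rho_n$ with probability near $1$. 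Since $f$ and (after the routine rescaling to clear denominators) $g$ are $\GapP$ functions — closure of $\GapP$ under multiplication by $5^{t}$ and under the polynomially-many arithmetic operations involved is standard \cite{fenner2003pp} — and since $g$ depends only on $n$, this exhibits $L\in\APP$, completing the proof.
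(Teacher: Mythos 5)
Your overall strategy matches the paper's: encode the acceptance probabilities of the full verifier and of the advice tester, each run on the maximally mixed proof state, as $\GapP$ functions via \cref{lem:quantumGapP}; use their ratio as the $\APP$ instance (the denominator is $x$-independent precisely because of the star); and apply Marriott--Watrous in-place amplification to the advice tester. However, two steps in your final accounting do not hold as written. First, you claim that in ``every other branch'' of the maximally mixed state --- every basis component other than the good proof --- the amplified tester accepts with probability at most $2^{-q}$. The amplification guarantee you yourself quoted only drives down states with $\E[\badv]\le 1/10$; a component orthogonal to $\rho_n$ can have $\E[\badv]=1/2$, in which case the amplified tester may accept it with constant probability and your bound on $\Pr[\badv=1\wedge\bout\neq L(x)]$ fails. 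The repair is to split the components into those with $\E[\badv]\le 1/10$ (killed by amplification) and those with $\E[\badv]>1/10$ (possibly accepted often, but then the second bullet of \cref{def:YQP} guarantees $\bout=L(x)$ with probability at least $9/10$ conditioned on acceptance); you quote that bullet when motivating the construction but never invoke it in the estimate. This is exactly the $\calB$ versus $\overline{\calB}$ decomposition in the paper's proof.

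Second, the step ``$\badv=1$ with probability $\ge 1-2^{-q}$ and \emph{then} $\bout=L(x)$ with probability $\ge 9/10$'' silently assumes that after the amplified tester accepts, the proof register is in a state to which $B_n$'s correctness guarantee applies. The Marriott--Watrous iterations repeatedly measure and rotate the proof register, so the state handed to $B_n$ after the amplified tester is not the state it would have received after the original $A_n$, and the promise of \cref{def:YQP} is stated only for the original composite circuit. The paper resolves this by decomposing the mixed state in the eigenbasis of the Marriott--Watrous rotation and by defining acceptance to additionally require the final two recorded bits to equal $1$, which projects the proof register back onto its initial eigenstate before $B_n$ is applied. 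Without some such device, your conditional-correctness step for the good-proof branch (and for the intermediate branches once you add them) is unjustified.
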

\begin{proof}
	Consider any language $L\in \YQPstar$.
	Let $\{Y_n,A_n,B_n\}_{n\in\nats}$ be the associated family of circuits
	and subcircuits,
	in which $Y_n$ takes string $x$ and a supposed witness or advice state as input,
	in which subcircuit
	$A_n$ validates the advice and produces output bit $\badv$,
	and in which,
	given $A_n$ accepted, $B_n$ uses the advice to verify whether the particular input
	$x$ is in $L$, producing the output bit $\bout$.
	Note that because we consider $\YQPstar$, the circuit $A_n$ only takes
	the witness state, not $x$, as input.
	Let $k$ and $m$ be polynomials in $n$ denoting the respective sizes of the ancilla
	and proof registers.

	We use the technique of strong, or in-place, error reduction of Marriott and Watrous
	\cite{marriott2005quantum} on the circuits $A_n$
	with a polynomial $q$ in $n$ of our choosing
	to produce a new circuit family $\{A'\}_{n\in\nats}$
	such that for any proof $\rho$,
	\begin{itemize}
		\item $\Pr\left[A_n(\rho)\right] \geq \frac{9}{10} \Rightarrow
		\Pr\left[A'_n(\rho)\right]\geq 1-2^{-q}$;
		\item $\Pr\left[A_n(\rho)\right] \leq \frac{1}{10} \Rightarrow
		\Pr\left[A'_n(\rho)\right]\leq 2^{-q}$.
	\end{itemize}
	For later use, we choose $q > \max\{3m,10\}$.

	Recall the error reduction algorithm of \cite{marriott2005quantum} involves,
	given some quantum input or witness state,
	applying a circuit $C$,
	recording whether the output is $\ket{0}$ or $\ket{1}$ in a variable $y_i$,
	applying $C^{\dagger}$, recording whether the circuit's ancilla register is
	in the all-zero
	state or not in a variable $y_{i+1}$, and repeating these steps for some number of
	iterations $M$.
	Call the full, amplified circuit $C'$.

	Applying the error-reduction procedure,
	we define $\{A''_n\}_{n\in\nats}$ to be the amplified
	circuits $\{A'_n\}_{n\in\nats}$ with the additional rule that the circuit
	accepts iff both $\badv=1$ and the final two recorded variables $y_{2M}=y_{2M+1}=1$.
	Further, define $\{A'''_n\}_{n\in\nats}$ so that $A'''_n = A''_n(\frac{\bbI}{2^m})$,
	with the maximally mixed state hard-wired into the proof register.
	Similarly, we define $\{Y'_n\}_{n\in\nats}$ to apply the amplified subcircuit
	$A'_n$ and $B_n$,
	we define $\{Y''_n\}_{n\in\nats}$ to apply $A''_n$ and $B_n$ and thus accept iff
	$\badv,\bout,y_{2M},y_{2M+1}$ all equal~1,
	and we define $\{Y'''_n\}_{n\in\nats}$ so that $Y'''_n(x) = Y''_n(x,\frac{\bbI}{2^m})$
	with the maximally mixed state hard-wired into the proof register,
	meaning that it uses $A'''_n$ as a subcircuit.

	\paragraph{Error-reduction properties}
	We remark on what the bits $y_{i}$ tell us about the state of the circuit.
		Studying the proof of \cite{marriott2005quantum},
		if after applying $C^\dagger$, a recorded bit $y_{2i+1}=1$,
		then the state of the ancilla register is projected into the all-zero state.
		Now, suppose the circuit $C'$ is applied to an $m$-qubit proof state,
		so there are $2^m$ eigenstates $\{\ket{\lambda_i}\}_{i\in [2^n]}$ of $C'$.
		Further studying the proof of \cite{marriott2005quantum},
		if the initial state given to $C'$ was an eigenstate $\ket{\lambda_i}$,
		and after a round of applying $C^\dagger$ the recorded bit is $y_{2i+1}$,
		then not only is the ancilla register known to be in the all-zero state,
		but the final state of the proof register is the same as its initial state,
		$\ket{\lambda_i}$.

	We are also able to characterize the probabilities of these outcomes.
		Intuitively, consecutive bits are transitions which depend on whether we expect
		the circuit beginning with a properly initialized all-zero ancilla register to
		produce an output qubit close to $\ket{1}$ or to $\ket{0}$, and vice-versa.
		Suppose an eigenstate $\ket{\lambda_i}$ is accepted by the
		original circuit $C$ with probability $p$.
		Then when $C'$ is run on $\ket{\lambda_i}$,
		we have that $\Pr\left[y_{2i+1}=1 \mid y_{2i}=1\right] = p$.
		Next, we consider $\Pr\left[y_{2i}=1\right]$.
		We may analyze this probability with a two-state Markov chain,
		such that if $y_{2i}=1$, then $y_{2i+2}=1$ with probability $p^2 + (1-p)^2$
		and $y_{2i+2}=0$ with probability $2p(1-p)$,
		while if $y_{2i}=0$, then $y_{2i+2}=0$ with probability $p^2+(1-p)^2$ and
		$y_{2i+2}=1$ with probability $2p(1-p)$.
		If we suppose $p > 1/2$,
		and given $C'$ begins in the all-zero state ($y_0=1$),
		then it is straightforward to conclude that for any particular $i$,
		we have $\Pr\left[y_{2i}=1\right]> 1/2$
		(see \eg{} \cite[Example 1.1]{levin2017markov}).

	\paragraph{Analysis}
	Applying \cref{lem:quantumGapP}, there exist $\GapP$ functions $f,g$ and polynomials
	$r,t$ such that for all $n$-bit $x$,
	\begin{gather*}
		\Pr\left[A'''_n\text{ accepts}\right] = \frac{f(1^n)}{5^{r(n)}}
		\quad\text{and}\quad
		\Pr\left[Y'''_n(x)\text{ accepts}\right] = \frac{g(x)}{5^{t(n)}} .
	\end{gather*}
	The function $f$ depends only on the input length $n$, not $x$,
	because the circuit $A'''_n$ is independent of $x$.
	Next, we define $F(1^n) = f(1^n) 5^{t(n)-r(n)}$, which is a $\GapP$ function
	since $5^{t(n)-r(n)}\in \class{FP}\subseteq \GapP$ and $\GapP$ is closed under
	multiplication.
	Given the definition of $\YQPstar$ guarantees there
	exists a ``good'' proof for circuit $A_n$,
	we have  $f(1^n),F(1^n)>0$.
	Combining these definitions,
	\begin{gather*}
		\frac{g(x)}{F(1^n)} =
		\frac{\Pr\left[Y'''_n(x)\text{ accepts}\right]}
		{\Pr\left[A'''_n\text{ accepts}\right] } .
	\end{gather*}

	We will show bounds on the ratio $g(x)/F(1^n)$ based on whether $x$ is in $L$ or
	not in $L$ in order to prove $L$ is in $\APP$.
	First, note that the ratio is upper-bounded by 1 since $Y'''_n$ only accepts if the
	subcircuit $A'''_n$ accepts,
	and it is lower-bounded by 0 since probabilities are non-negative.
	Next, let $\left\{\ket{\lambda_i}\right\}_{i\in[2^m]}$ be the set of
	eigenvectors $\ket{\lambda_i}$ of the circuit $A_n$.
	By writing the maximally mixed state, which is hard-wired into the proof register of
	$Y'''_n$,
	in terms of this eigenbasis, we find
	\begin{align*}
		& \frac{\Pr\left[Y'''_n(x)\text{ accepts}\right]}
		{\Pr\left[A'''_n\text{ accepts}\right]}
		= \frac{\Pr\left[Y''_n(x,\frac{\bbI}{2^m})\text{ accepts}\right]}
		{\Pr\left[A''_n(\frac{\bbI}{2^m})\text{ accepts}\right]}
		= \frac{\sum\nolimits_{i=1}^{2^{m}}
			\Pr\left[Y''_n(x,\ket{\lambda_i})\text{ accepts}\right]}
		{\sum\nolimits_{i=1}^{2^{m}} \Pr\left[A''_n(\ket{\lambda_i})
			\text{ accepts}\right]} \\
		=& \frac{\sum\nolimits_{i=1}^{2^{m}}
			\Pr\left[Y''_n(x,\ket{\lambda_i})\text{ accepts}
				\mid A''_n(\ket{\lambda_i})\text{ accepts} \right]
			\Pr\left[A''_n(\ket{\lambda_i})\text{ accepts}\right]}
		   {\sum\nolimits_{i=1}^{2^{m}} \Pr\left[A''_n(\ket{\lambda_i})
		   	\text{ accepts}\right]}    \\
		=& \frac{\sum\nolimits_{i=1}^{2^{m}}
			\Pr\left[B_n(x,\ket{\lambda_i})\text{ accepts}\right]
			\cdot \Pr\left[A''_n(\ket{\lambda_i})\text{ accepts}\right]}
		  {\sum\nolimits_{i=1}^{2^{m}} \Pr\left[A''_n(\ket{\lambda_i})
		  	\text{ accepts}\right]}
	\end{align*}
	where we have used the fact that $Y''_n$ accepting requires that
	$A''_n$ accepts and our observation
	that $A''_n$ accepting guarantees the initial eigenstate $\ket{\lambda_i}$ is sent
	on to the subcircuit $B_n$ within $Y''_n$.
	Define
	\[
		\calB = \left\{ i\in[2^m] \mid \Pr\left[A_n(\ket{\lambda_i})\right]
		\leq 0.1 \right\} ,
	\]
	which are intuitively the ``bad'' proofs, such that states in $\calB$ will be rejected
	by $A'_n$ with high probability while the ``not bad'' states in $\overline{\calB}$
	cause $B_n$ to output the correct answer with high probability.
	Then we can rewrite both the numerator and denominator in the above ratio to give
	\[
		\frac{
			\text{\small\ensuremath{\sum\nolimits_{i\in \calB} }}
			\text{\footnotesize\ensuremath{\Pr\left[B_n(x,\ket{\lambda_i})\text{ accepts}\right]
				\cdot \Pr\left[A''_n(\ket{\lambda_i})\text{ accepts}\right] + }}
			\text{\small\ensuremath{\sum\nolimits_{i\in\overline{\calB}} }}
			\text{\footnotesize\ensuremath{ \Pr\left[B_n(x,\ket{\lambda_i})\text{ accepts}\right]
				\cdot \Pr\left[A''_n(\ket{\lambda_i})\text{ accepts}\right] }}
		}
		{
			\sum\nolimits_{i\in \calB} \Pr\left[A''_n(\ket{\lambda_i})\text{ accepts}\right]
			+ \sum\nolimits_{i\in\overline{\calB}} \Pr\left[A''_n(\ket{\lambda_i})\text{ accepts}\right]
		} .
	\]
	We will use this expression as the starting point for our analysis of the
	YES and NO cases.

	Now, suppose we have a YES instance with $x\in L$.
	We are guaranteed at least one proof is accepted by $A$ with
	high probability, and denote it by $\ket{\lambda^*}$.
	Then, we may calculate that $g(x)/F(1^n)$ is at least
	\begin{align*}
		\frac{
			\sum\nolimits_{i\in \calB} 0
			+ \sum\nolimits_{i\in\overline{\calB}} \frac{9}{10}
			\Pr\left[A''_n(\ket{\lambda_i})\text{ accepts}\right]
		}
		{
			\sum\nolimits_{i\in \calB} 2^{-q}
			+ \sum\nolimits_{i\in\overline{\calB}}
			\Pr\left[A''_n(\ket{\lambda_i})\text{ accepts}\right]
		}
		&= \frac{
			\sum\nolimits_{i\in\overline{\calB}} \frac{9}{10}
			\Pr\left[A''_n(\ket{\lambda_i})\text{ accepts}\right]
		}
		{
			\abs{\calB} 2^{-q}
			+ \sum\nolimits_{i\in\overline{\calB}} \Pr\left[A''_n(\ket{\lambda_i})\text{ accepts}\right]
		} \\
		&\geq \frac{\frac{9}{10} \Pr\left[A''_n(\ket{\lambda^*})\text{ accepts}\right]}
		{ \abs{\calB} 2^{-q} + \Pr\left[A''_n(\ket{\lambda^*})\text{ accepts}\right]} ,
	\end{align*}
	where the second line follows by the fact that $x/(c+x)$ decreases as $x$ decreases.
	Next, we use the same fact, the earlier bound on the probability
	that the error-reduction variables $y_{2M},y_{2M+1}$ equal 1,
	and our choice $q> \max\{3m,10\}$
	to find that the above is at least
	\begin{align*}
	  	\frac{\frac{9}{10}(1-2^{-q}) (0.9)(0.5) }
	  	{\abs{\calB}2^{-q} + (1-2^{-q}) (0.9)(0.5)}
		&\geq \frac{ 0.405 (1-2^{-q}) }{2^{m-q} + 0.45(1-2^{-q}) } \\
		&\geq \frac{0.405(1-2^{-q}) }{2^{-q/3} + 0.45(1-2^{-q}) } \\
		&\geq \frac{ 0.405 (1 - 2^{-10})}{ 2^{-10/3} + 0.45 (1-2^{-10})}
		> 0.73 .
	\end{align*}

	On the other hand, consider a NO instance.
	We have that $g(x)/F(1^n)$ is at most
	\begin{align*}
		\frac{
			\abs{\calB} 2^{-q} + \sum_{i\in\overline{\calB}}
			\frac{1}{10} \Pr\left[A''_n(\ket{\lambda_i})\text{ accepts}\right]
		}
		{
			\sum_{i\in \calB} 0 + \sum_{i\in \overline{\calB}} \Pr\left[A''_n(\ket{\lambda_i})\text{ accepts}\right]
		}
		&\leq
		\frac{2^{m-q}}
		{\sum_{i\in \overline{\calB}} \Pr\left[A''_n(\ket{\lambda_i})
			\text{ accepts}\right]
		}
		+
		\frac{1}{10} \\
		&\leq
		\frac{2^{m-q}}{2^{-m}(1-2^{-q})} + \frac{1}{10}	\\
		&
		< \frac{2^{-q/3} }{1-2^{-q}} + \frac{1}{10}
		< 0.2	,
	\end{align*}
	where the second line follows by the existence of $\ket{\lambda^*}$
	and the final line uses our choice of $q > \max\{3m,10\}$.

	We have shown a constant separation of $g(x)/F(1^n)$ in YES and NO instances.
	This satisfies the definition of $\APP$ in \cref{def:APP} of $\APP$,
	so we conclude $\YQPstar\subseteq \APP$.
\end{proof}

Next, the fact $\APP$ is known to be $\PP$-low \cite[Theorem 6.4.14]{li1993counting}
gives us the following corollary.

\begin{corollary}\label{lem:YQPisLow}
	$\YQPstar$ is $\PP$-low, \ie $\PP^{\YQPstar}=\PP$.
\end{corollary}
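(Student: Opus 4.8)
The plan is to simply combine \cref{lem:YQPinAPP} with the known $\PP$-lowness of $\APP$. The key observation making this clean is that $\YQPstar$ is defined in \cref{def:YQP} as a class of (total) languages, not a promise class, so the inclusion $\YQPstar \subseteq \APP$ from \cref{lem:YQPinAPP} is a genuine inclusion of language classes and oracle accesses chain in the ordinary way. Recall that $\PP^{\mathcal{C}} := \bigcup_{L \in \mathcal{C}} \PP^{L}$, and that this is monotone in $\mathcal{C}$: if $\mathcal{C} \subseteq \mathcal{D}$, then every $\PP^{L}$ with $L \in \mathcal{C}$ already occurs in the union defining $\PP^{\mathcal{D}}$, so $\PP^{\mathcal{C}} \subseteq \PP^{\mathcal{D}}$.

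Concretely, I would first invoke \cref{lem:YQPinAPP} to get $\YQPstar \subseteq \APP$, then apply monotonicity to obtain $\PP^{\YQPstar} \subseteq \PP^{\APP}$. Next I would cite Li's result that $\APP$ is $\PP$-low, i.e. $\PP^{\APP} = \PP$, which gives $\PP^{\YQPstar} \subseteq \PP$. For the reverse inclusion, a $\PP$ oracle machine can simply decline to make any query (equivalently, $\PTIME \subseteq \BQP \subseteq \YQPstar$, so $\PP = \PP^{\PTIME} \subseteq \PP^{\YQPstar}$), whence $\PP \subseteq \PP^{\YQPstar}$. Combining the two inclusions yields $\PP^{\YQPstar} = \PP$.

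There is essentially no hard step here; the only care needed is in the bookkeeping. One should confirm that $\APP$ really is a class of total languages — as it is, since the promise gap in \cref{def:APP} lives in the witnessing $\GapP$ functions $f,g$ rather than in any promise restricting the inputs $x$ — so that Li's $\PP$-lowness result and the oracle semantics apply with no promise-problem caveats. The same remark applies to $\YQPstar$: although the verification \emph{protocol} has a promise gap, the class itself consists of ordinary languages, so no subtlety arises in feeding a $\YQPstar$ language to a $\PP$ machine as an oracle.
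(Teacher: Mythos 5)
Your proposal is correct and matches the paper's argument exactly: the corollary is obtained by combining \cref{lem:YQPinAPP} with Li's result that $\APP$ is $\PP$-low, and your additional remarks on monotonicity of oracle access and the trivial reverse inclusion are sound bookkeeping.
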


For intuition, an alternative proof of \cref{lem:YQPisLow} might have relied
on the equality $\PP=\postBQP$ \cite{aaronson2005quantum},
where $\postBQP$ has the ability to post-select, \ie{} it is guaranteed to output
the correct answer with high probability \emph{conditioned on} some other event
which may occur with very small probability.
So, instead of $\PP^\YQPstar$, we might have considered $\postBQP^\YQPstar$.
Whenever the $\postBQP$ machine would make a query, it instead could
run the $\YQPstar$ proof-validation circuit on
the maximally mixed state, post-select on it accepting,
then simulate the rest of the $\YQPstar$ computation.

We are now able to give a corrected proof of the result originally claimed for
$\BQPqpoly$ but only proved for $\BQPpoly$ by Aaronson \cite{aaronson_subtle}.
We mostly repeat Aaronson's proof, but substitute $\YQPstar$ where he relied on $\QMA$.

\begin{theorem}\label{thm:main}
	If $\PP \subseteq \BQPqpoly$, then the Counting Hierarchy collapses to
	$\CH = \QMA = \YQPstar$.
\end{theorem}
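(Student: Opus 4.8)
The plan is to follow Aaronson's Karp--Lipton-style argument almost verbatim, substituting $\YQPstar$ for $\QMA$ (and the Aaronson--Drucker characterization of quantum advice for his classical-advice reasoning), and then to extract the collapse from the $\PP$-lowness of \cref{lem:YQPisLow}. First I would unpack the hypothesis: assuming $\PP\subseteq\BQPqpoly$ and using $\BQPqpoly=\YQPstarpoly$~\cite{aaronson2014full}, every $\PP$ language has a $\YQPstar$-style protocol (an oblivious advice-testing subcircuit $A_n$ followed by a decision subcircuit $B_n$) together with a trusted classical advice string $w_n$ of polynomial length.

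The crux will be to strip off that trusted advice, i.e.\ to show that $\PP\subseteq\YQPstarpoly$ implies $\PP\subseteq\YQPstar$; morally this is the quantum-advice-safe replacement for the step $\PP^{\YQPstarpoly}\to\PP^{\YQPstar}$ in the chain $\PP^\PP\to\PP^{\YQPstarpoly}\to\PP^{\YQPstar}\to\PP$. I would fix a downward-self-reducible $\sharpP$-complete problem such as \perm{}; since $\YQPstar$ is closed under polynomial-time many-one reductions, it suffices to place the associated $\PP$-complete decision problem in $\YQPstar$ without advice. Starting from the $\YQPstarpoly$ protocol granted by the hypothesis, I would build a plain $\YQPstar$ verifier in which Merlin additionally sends, inside the untrusted proof, the claimed advice strings for all relevant input lengths; the advice-testing stage then certifies, via a self-reduction / LFKN-style consistency check, that these claimed advices are mutually consistent and hence correct, with the quantum portion of the proof tested obliviously exactly as in the Aaronson--Drucker verification. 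Any advice set that passes induces correct behaviour on the actual input, so the problem lies in $\YQPstar$, giving $\PP\subseteq\YQPstar$.

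Once $\PP\subseteq\YQPstar$ is established, the remainder is formal. By \cref{lem:YQPisLow} we have $\PP^{\YQPstar}=\PP$, hence $\CHi[2]=\PP^\PP\subseteq\PP^{\YQPstar}=\PP$, and then inductively $\CHi[i]=\PP$ for every $i$, so $\CH=\PP$. Combining $\PP\subseteq\YQPstar$ with the standard inclusions $\YQPstar\subseteq\QMA\subseteq\AzeroPP\subseteq\PP$ pins down $\PP=\YQPstar=\QMA$, and altogether $\CH=\PP=\QMA=\YQPstar$, as claimed.

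I expect the advice-elimination step to be the only real obstacle, and it is precisely where Aaronson's original proof broke: he tried to remove the advice by hard-coding it into oracle queries, which is impossible for a genuine quantum advice state. The fix is conceptual rather than computational --- never hard-code the advice, but re-supply it inside a $\YQPstar$ proof and exploit (i) the advice-testing stage already present in the definition of $\YQPstar$, (ii) the verifiability of quantum advice from~\cite{aaronson2014full}, and (iii) the downward self-reducibility of a $\PP$-complete problem to certify advice quality --- after which the $\PP$-lowness of \cref{lem:YQPisLow} does the job Aaronson had hoped to obtain from $\PP^{\BQP}=\PP$.
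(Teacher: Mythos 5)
Your proposal is correct and follows essentially the same route as the paper: invoke $\BQPqpoly=\YQPstarpoly$, have Merlin re-supply the (now untrusted) advice and circuit for \perm{} inside the oblivious proof, certify them with the LFKN protocol plus the self-reducibility of \perm{}, conclude $\PP=\YQPstar$, and then collapse $\CH$ level by level via the $\PP$-lowness of $\YQPstar$ from \cref{lem:YQPisLow}. The only cosmetic difference is that you phrase the verification as a consistency check across input lengths while the paper runs the interactive protocol directly with $C,\ket{\psi}$ playing the prover and then corrects via random self-reducibility; these are the same mechanism.
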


\begin{proof}
	Suppose $\PP \subseteq \BQPqpoly$. From \cite{aaronson2014full}, we know that
	$\BQPqpoly=\YQPstarpoly$.
	Then in $\YQPstar$, without any trusted advice, Arthur can request Merlin
	sends many copies of the quantum advice $\ket{\psi}$ and a description of the
	circuit $C$ such that $C,\ket{\psi}$ compute \prob{Permanent},
	a $\PP$-complete problem.
	Of course, this advice is now untrusted.
	Arthur verifies that $C,\ket{\psi}$ in fact work on a large fraction of inputs
	by simulating the interactive protocol for $\sharpP$ due to \cite{lund1992algebraic},
	which also works for $\PP$, using $C,\ket{\psi}$ in place of the prover.
	If the protocol accepts (meaning the ``prover'' worked),
	then Arthur can use the random self-reducibility of
	\prob{Permanent} to generate a circuit $C'$ which is correct on \emph{all} inputs
	(see \eg{} \cite[Sec. 8.6.2]{arora2009computational}).
	Thus, we have $\PP = \YQPstar$.

	In this way, any level of the Counting Hierarchy
	$\CHi[i] = \left(\CHi[i-1]\right)^\PP$ with $i>1$ is reducible to
	$\left(\CHi[i-1]\right)^{\YQPstar}$
	which by \cref{lem:YQPisLow} equals $\CHi[i-1]$.
	This works recursively for all levels, collapsing $\CHi[i]$ to $\CHi[1]=\PP$,
	so that all of $\CH=\PP=\YQPstar$.
\end{proof}

Given the above result, we can also fully recover the following result originally
claimed by Aaronson \cite{aaronson_subtle}, giving an improved unconditional upper bound
on fixed-size quantum circuits with quantum advice.

\begin{theorem}\label{thm:PPcircs}
	$\PP$ does not have quantum circuits of size $n^k$ for any fixed $k$.
	Furthermore, this holds even if the circuits can use quantum advice.
\end{theorem}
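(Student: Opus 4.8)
The plan is to follow the familiar Karp--Lipton-style route: assume the bound fails, invoke \cref{thm:main} to collapse the Counting Hierarchy, and then contradict Aaronson's unconditional lower bound against $\PTIME^\PP$ --- the ``first result'' of \cite{aaronson_subtle}, which his erratum left intact. So, suppose toward a contradiction that for some fixed constant $k$ every language in $\PP$ is decided by a family of (non-uniform) quantum circuits of size $n^k$ that are allowed to consume a polynomial-length quantum advice state.

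First I would observe that this hypothesis already gives $\PP \subseteq \BQPqpoly$: circuits of size $n^k$ are in particular of polynomial size, and a classical description of each non-uniform circuit can be prepended to the advice, which $\BQPqpoly$ absorbs. Then \cref{thm:main} applies and the Counting Hierarchy collapses, so that $\CH = \PP$. Since $\PTIME^\PP \subseteq \PP^\PP = \CHi[2] \subseteq \CH$ while trivially $\PP \subseteq \PTIME^\PP$, this collapse yields the equality $\PTIME^\PP = \PP$.

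Next I would transport the assumed circuit upper bound across this equality: because $\PTIME^\PP = \PP$ and, by hypothesis, $\PP$ has quantum circuits of size $n^k$ with quantum advice, the class $\PTIME^\PP$ has quantum circuits of size $n^k$ with quantum advice for the \emph{same} fixed $k$. This contradicts Aaronson's unconditional theorem that $\PTIME^\PP$ has no quantum circuits of size $n^k$ with quantum advice for any fixed $k$. Hence no such $k$ exists, which is exactly the stated bound against $\PP$; the weaker claim --- no quantum circuits of size $n^k$ \emph{without} advice --- is the special case of trivial advice.

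I do not anticipate a real obstacle, since the diagonalization is already packaged inside Aaronson's $\PTIME^\PP$ bound and the collapse is \cref{thm:main}, so the argument is essentially bookkeeping. The one place that merits care is the quantifier on $k$: the unconditional lower bound only forbids circuits of each \emph{fixed} polynomial size, so the contradiction must be drawn at the very exponent $k$ fixed at the outset, rather than merely concluding that $\PTIME^\PP$ has polynomial-size circuits.
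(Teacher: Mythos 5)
Your proposal is correct and follows essentially the same route as the paper: assume the bound fails, deduce $\PP\subseteq\BQPqpoly$, invoke \cref{thm:main} to collapse $\CH$ so that $\PTIME^\PP=\PP$, and contradict Aaronson's unaffected lower bound for $\PTIME^\PP$ at the same fixed exponent $k$. Your added care about drawing the contradiction at the specific $k$ (rather than just "polynomial size") is exactly the right bookkeeping and is implicit in the paper's argument.
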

\begin{proof}
	Suppose $\PP$ does have circuits of size $n^k$.
	This implies $\PP\subseteq \BQPqpoly$, which by \cref{thm:main} implies
	$\CH = \YQPstar$, which includes
	$\PTIME^\PP = \PP = \YQPstar$.
	Together, there are circuits of size $n^k$ for $\PTIME^\PP$, which contradicts
	the result of \cite[Theorem 4]{aaronson_subtle} (unaffected by the bug)
	that $\PTIME^\PP$ does not have such circuits even with quantum advice.
\end{proof}

In fact, \cite{aaronson_subtle} noted
that the proof showing
$\PTIME^\PP$ does not have circuits of size $n^k$ for fixed $k$
even with quantum advice can be strengthened.
Substituting this stronger result into the above proof,
we have that \cref{thm:PPcircs} can be strengthened to show for all functions
$f(n)\leq 2^n$,
the class $\class{PTIME(f(f(n)))}$,
which is like $\PP$ but for machines of running time $f(f(n))$,
requires quantum circuits using quantum advice of size at least $f(n)/n^2$.
In particular, this implies $\class{PEXP}$, the exponential-time version of $\PP$,
requires quantum circuits with quantum advice of ``half-exponential'' size (meaning
a function that becomes exponential when composed with itself \cite{miltersen1999super}).

\bibliographystyle{alphaurl}
\bibliography{bibliography.bib}

\end{document}